\newcommand{\version}{\today}
\theoremstyle{plain}
\newtheorem{thm}{THEOREM}[section]
\newtheorem{lm}[thm]{LEMMA}
\theoremstyle{definition}
\theoremstyle{remark}
\newcommand{\upchi}{\raise1pt\hbox{$\chi$}}
\newcommand{\R}{{\mathord{\mathbb R}}}
\newcommand{\N}{{\mathord{\mathbb N}}}
\newcommand{\tr}{{\rm Tr}}
\renewcommand{\|}{{\Vert}}
\numberwithin{equation}{section}
\renewcommand{\ln}{\log}
\newcommand{\un}{{\rm 1\kern -2.5pt l}}
\begin{document}
%%%%%%%%DRAFT%%%%%%%
\iffalse
% [arxiv_v2: inline-PS \special stripped, 158 chars]
\fi
%%%%%%%%%%%%%%%%%%%%%%
\markboth{\scriptsize{CLL \version}}{\scriptsize{CLL February 26, 2015}}
\def\H{\mathcal{H}}
\def\cT{\mathcal{T}}

\title{{\sc On a Quantum Entropy Power  Inequality of Audenaert, Datta and Ozols}}
\author{
 Eric A. Carlen$^{1}$, Elliott H. Lieb$^{2}$ and Michael Loss$^{3}$ \\
\small{$^{1}$ Department of Mathematics, Hill Center,}\\[-6pt]
\small{Rutgers University, 110 Frelinghuysen Road Piscataway NJ 08854-8019 USA}\\[-6pt]
\small{$^{2}$ Departments of Mathematics and Physics, Jadwin Hall, Princeton University}\\[-6pt]
\small{ P.~O.~Box 708, Princeton, NJ  08542.}\\[-6pt]
\small{$^{3}$ School of Mathematics, Georgia Tech, Atlanta GA 30332}\\[-6pt]
 }

\date{ March  19, 2016}
\maketitle

\footnotetext [1]{Work partially supported by U.S.
National Science Foundation grant  DMS 1501007.}

\footnotetext [2]{Work partially supported by U.S. National
Science Foundation
grant PHY-1265118 }

\footnotetext [3]{ Work partially supported by U.S. National Science Foundation Grant DMS 1301555
and the German Humboldt Foundation 

 \hfill\break
\copyright\, 2016 by the authors. This paper may be reproduced, in its
entirety, for non-commercial purposes.
}

\begin{abstract} We give a short proof of a recent inequality of Audenaert, Datta and Ozols, and determine
cases of equality.
\end{abstract}
%
%\medskip
%\centerline{Key Words: entropy, entanglement, strong subadditivity.}
%
%\centerline{ MSC: 82P45, 47A63}
%%%%%%%%%%%%%%%%%%%%%%%%%%%%%%%%%%%%%%%%%%%%%%%
%%%%%%%%%%%%%%%%%%%%%%%%%%%%%%%%%%%%%%%%%%%%%%%

%%%%%%%%%%%%%%%%%%%%%%%%%%%%%%%%%%%%%%%%%%%%%%%

\maketitle

\section{Introduction}

Let $X$ and $Y$ be independent $\R^n$ valued random variables with probability densities $\rho$ and  $\sigma$
respectively. For any $\theta \in (0,\pi/2) $, a simple calculation shows that the sum $\cos\theta X + \sin \theta Y$ has the density 
\begin{equation}\label{sccon1}
\rho \star_\theta \sigma(x) :=  
\int_\R \rho(x\cos\theta  + y \sin\theta )\sigma(-x\sin\theta  + y\cos\theta ){\rm d}y\ .
\end{equation}
If $U_\theta$ is the orthogonal transformation on $\R^{2n}$ given in block form by 
$U_\theta  = \left[\begin{array}{cc} \cos\theta \mathbb{1}& -\sin\theta\mathbb{1} \\ \sin\theta \mathbb{1} & \phantom{-}\cos\theta \mathbb{1}\end{array}\right]$, we can rewrite (\ref{sccon1}) as
\begin{equation}\label{sccon2}
\rho \star_\theta \sigma(x) :=  
\int_\R  \rho\otimes \sigma (U_\theta (x,y)){\rm d}y\ .
\end{equation}

Every probability density $\rho$ on $\R$ that has a finite second moment  has a well-defined entropy $S(\rho)$ given by
${\displaystyle S(\rho) = - \int_\R \rho \ln \rho(x){\rm d}x}$
with $S(\rho) \in [-\infty, S(\gamma_\rho)]$ where $\gamma_\rho$ is the centered Gaussian density with the same variance as $\rho$.  One form of Shannon's Entropy Power Inequality, first proved by Stam \cite{S}, is that for all $\rho$ and $\sigma$  with finite variance, and all $\theta\in (0,\pi/2)$,
\begin{equation}\label{sccon2}
S(\rho \star_\theta \sigma) \geq \cos^2\theta S(\rho) + \sin^2\theta S(\sigma)\ .
\end{equation}
For a different proof, see \cite{L}.

Several authors \cite{ADO,KS} have recently investigated quantum analogs of the entropy power inequality.   Let $\H$ be a separable Hilbert space. 
Let $U$ be any unitary operator on $\H\otimes \H$, and let $\rho$ and let $\rho$ and $\sigma$ be two density matrices on $\H$; i.e., positive trace class operators on $\H$.  Then the operation
$$(\rho,\sigma) \mapsto  \tr_2[ U^*(\rho\otimes \sigma) U]$$
where $\tr_2$ is the partial trace over the second factor on $\H\otimes \H$, provides general quantum analog of the classical  scaled convolution operation in (\ref{sccon2}).  To get a closer analog, for which one can prove an analog of the Entropy Power Inequality (\ref{sccon2}), one must make a choice of $U$ that is somehow analogous to the classical rotation in 
(\ref{sccon2}).  There is a natural way to do this for a system of $n$ bosons, and the 
quantum analog of (\ref{sccon2}) was proved in this setting by K\"onig and Smith \cite{KS}, using a quantum implementation of the method of Stam \cite{S}. In their setting, the Hilbert space $\H$ is infinite dimensional.  More recently, Audenaert, Datta and Ozols  have sought and proved \cite{ADO} an analog of (\ref{sccon2})  that does not require the structure associated to a system of $n$ bosons for its formulation, and is valid for any pair of density matrices on any  separable Hilbert space $\H$.

Let  $\H$ be any separable Hilbert space, possibly but not necessarily  finite dimensional.  Define the {\em swap operator} $S$ on $\H\otimes \H$ by 
$S(u\otimes v) = v\otimes u$.  $S$ is self adjoint and unitary, and for each $\theta\in [0,\pi/2]$,
$\cos\theta \mathbb{1}_{\H\otimes \H} + i \sin\theta S$ is unitary. To simplify the notation in what follows, we define  $ \sqrt t := \cos\theta$ so that for $\theta\in (0,\pi/2)$,
$\sin\theta = \sqrt{1- t}$. 
For each $t\in [0,1]$, define
\begin{equation}\label{partialsw}
U_t =  \sqrt t\,  \mathbb{1}_{\H\otimes \H} + i \sqrt{1- t}\, S\ .
\end{equation}
Then  as $t$ varies between $0$ and $1$, 
$U_t$ provides a continuous unitary interpolation between the identity and the swap operator $S$. Hence, for 
$t\in (0,1)$, $U_\theta$ may be thought of as a {\em partial swap}.  
See \cite{ADO} for the phyical context.  For any two density matrices $\rho$ and $\sigma$ on $\H$, and any $t\in [0,1]$,
define
\begin{equation}\label{partialswcon}
\rho\star_t \sigma  =  \tr_2 ( U_t (\rho\otimes \sigma) U^*_t)\ .
\end{equation}
A straightforward computation yields the explicit formula:
\begin{equation}\label{partialswcon2}
\rho \star_t \sigma = t \rho + (1-t)\sigma + i\sqrt{t (1-t)} [\rho,\sigma]\ .
\end{equation}

For a density matrix $\rho$ on $\H$, let $S(\rho) := -\tr[\rho\ln \rho]$ be the von Neumann entropy of $\rho$. Audenaert, Datta and Ozols  prove \cite{ADO} the following analog of (\ref{sccon2}):
\begin{equation}\label{ado}
S(\rho\star_t \sigma) \geq t S(\rho) + (1-t)S(\sigma)\ . 
\end{equation}

In \cite{ADO}, (\ref{ado}) is by means of  a {\em majorization inequality}; see \cite{MO}:
Given two monotone non-increasing sequence $\{\kappa_j\}$ and $\{\lambda_j\}$ of non-negative numbers with $\sum_{j=1}^\infty \kappa_j = 
\sum_{j=1}^\infty \lambda_j < \infty$, then  $\{\kappa_j\}$ {\em is majorized  by} $\{\lambda_j\}$
 if and only if for all $k\in \N$, 
$\sum_{j=1}^k \kappa_j \leq \sum_{j=1}^k\lambda_j$, in which case we write $\{\kappa_j\} \prec \{\lambda_j\}$.

A theorem of Hardy, Littlewood and P\'olya \cite{HLP} and Karamata \cite{K}, extended to infinite sequences in
\cite{GM}, says that $\{\kappa_j\} \prec \{\lambda_j\}$ if an only if there is a doubly stochastic matrix $S$ 
($S_{i,j} \geq 0$ and  $\sum_{j=1}^\infty S_{i,j} = \sum_{j=1}^\infty S_{i,j}  = 1$ for all $i,j$)  such that
$\kappa_i = \sum_{j=1}^\infty S_{i,j}\lambda_j$ for all $j$. 
Then  by the {\em strict} concavity of $h(x) = -x \ln x$ on $[0,1]$ we have the following:
 {\em When $\{\kappa_j\}$ and  $\{\lambda_j\}$ are two non-negative non-increasing sequences with
$\sum_{j=1}^\infty  \kappa_j = \sum_{j=1}^\infty \lambda_j = 1$, and $\{\kappa_j\} \prec \{\lambda_j\}$, 
then $-\sum_{\j=1}^\infty \kappa_j\ln   \kappa_j \geq  -\sum_{\j=1}^\infty \lambda_j\ln \lambda_j$,
and when $\sum_{\j=1}^\infty \lambda_j\ln \lambda_j < \infty$, 
there is equality if and only if $\{\kappa_j\} = \{\lambda_j\}$.} For any non-negative compact operator $A$ on $\H$, let $\lambda_j(A)$ denote the $j$th largest eigenvalue of $A$,
with the eigenvalues repeated according to their multiplicity.  
The first part of the following theorem is proved in \cite{ADO} by a much longer argument. 

\begin{thm}\label{epiq}  For any two density matrices $\rho$ and $\sigma$ on $\H$, and any $t\in [0,1]$, let 
$\rho \star_t\sigma$ be given by (\ref{partialswcon2}). Then
\begin{equation}\label{epiq2}
\{\lambda_j(\rho\star_t \sigma)\} \prec \{ t\lambda_j(\rho) + (1-t)\lambda_j(\sigma)\}\ . 
\end{equation}
Moreover, $\{ t\lambda_j(\rho) + (1-t)\lambda_j(\sigma)\} = \{\lambda_j(\rho\star_t \sigma)\}$
if and only if  there is an orthonormal basis $\{\phi_j\}$ of $\H$ such that $\rho\phi_j = \lambda_j(\rho)\phi_j$ and $\sigma
 = \lambda_j(\sigma)\phi_j$ for each $j$. 
\end{thm}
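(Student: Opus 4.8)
The plan is to derive the majorization relation \eqref{epiq2} from a single operator inequality: for every $k$, the sum of the $k$ largest eigenvalues of $\rho\star_t\sigma$ is bounded by $k$ times the largest eigenvalue... no, more precisely, by the Ky Fan $k$-norm characterization,
\[
\sum_{j=1}^k \lambda_j(A) \;=\; \max\{\,\Tr[PA] \;:\; P \text{ an orthogonal projection of rank } k\,\}
\]
for any self-adjoint trace-class $A$. Applying this to $A = \rho\star_t\sigma$ and using formula \eqref{partialswcon2}, for any rank-$k$ projection $P$ we have
\[
\Tr[P(\rho\star_t\sigma)] = t\,\Tr[P\rho] + (1-t)\,\Tr[P\sigma] + i\sqrt{t(1-t)}\,\Tr[P[\rho,\sigma]]\ .
\]
The key observation is that the commutator term is \emph{not} in general zero for a fixed $P$, so the naive argument fails; instead one should note that $\Tr[P(\rho\star_t\sigma)]$ is real (being a sub-sum of eigenvalues of a self-adjoint operator when $P$ is the spectral projection), so we may take the real part, and $\mathrm{Re}\,\Tr[P[\rho,\sigma]]$ need not vanish either. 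The correct route is: choose $P$ to be the spectral projection of $\rho\star_t\sigma$ onto its $k$ largest eigenvalues, write $\Tr[P(\rho\star_t\sigma)] = \mathrm{Re}\,\Tr[P(\rho\star_t\sigma)]$, and bound $|i\sqrt{t(1-t)}\,\Tr[P[\rho,\sigma]]|$. But this still leaves a commutator. The cleaner path, which I would actually pursue, is to observe that $U_t(\rho\otimes\sigma)U_t^*$ has the same eigenvalues as $\rho\otimes\sigma$, namely $\{\lambda_i(\rho)\lambda_j(\sigma)\}$, and that $\rho\star_t\sigma = \tr_2[U_t(\rho\otimes\sigma)U_t^*]$ is a partial trace. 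Partial traces are trace-preserving and, crucially, by the pinching/averaging argument the eigenvalue sequence of a partial trace is majorized by... that is not quite right either, since the eigenvalues of $\rho\otimes\sigma$ don't obviously give $t\lambda_j(\rho)+(1-t)\lambda_j(\sigma)$.

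\textbf{Here is the approach I would commit to.} Write $\rho\star_t\sigma = t\rho + (1-t)\sigma + i\sqrt{t(1-t)}[\rho,\sigma]$ and compare it to $M := t\rho + (1-t)\sigma$. Since $i[\rho,\sigma]$ is self-adjoint and traceless, and one checks that $\rho\star_t\sigma$ and $M$ are related by conjugation-like structure... Actually the honest and robust tool is Ky Fan plus the following: for \emph{any} self-adjoint $A$ and the rank-$k$ spectral projection $P$ of $A$ onto its top eigenvalues, and any unitary $V$,
\[
\sum_{j=1}^k \lambda_j(A) = \Tr[PA].
\]
Now take $A = \rho\star_t\sigma$. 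There exists a unitary $W$ on $\H\otimes\H$ — indeed $W = U_t$ — such that $\rho\star_t\sigma = \tr_2[U_t(\rho\otimes\sigma)U_t^*]$. Lift $P$ to $P\otimes\mathbb{1}$ on $\H\otimes\H$; then
\[
\Tr[P(\rho\star_t\sigma)] = \Tr_{\H\otimes\H}[(P\otimes\mathbb{1})U_t(\rho\otimes\sigma)U_t^*] = \Tr[Q(\rho\otimes\sigma)]
\]
where $Q := U_t^*(P\otimes\mathbb{1})U_t$ is a \emph{projection} (since $U_t$ is unitary and $P\otimes\mathbb{1}$ is a projection) acting on $\H\otimes\H$. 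Expanding $U_t = \sqrt{t}\,\mathbb{1} + i\sqrt{1-t}\,S$ and using $S(\rho\otimes\sigma)S = \sigma\otimes\rho$, $S(P\otimes\mathbb{1})S = \mathbb{1}\otimes P$, one computes
\[
\Tr[Q(\rho\otimes\sigma)] = t\,\Tr[(P\otimes\mathbb{1})(\rho\otimes\sigma)] + (1-t)\,\Tr[(\mathbb{1}\otimes P)(\rho\otimes\sigma)] + (\text{cross terms that vanish upon taking the trace}),
\]
and the cross terms, involving $\Tr[S(\rho\otimes\sigma)] = \Tr[\rho\sigma]$ times off-diagonal pieces, need to be checked to cancel because $Q$ is real-symmetric in the right sense — this is exactly the place where the factor $i$ and the self-adjointness of $S$ conspire. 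Granting that, we get $\Tr[Q(\rho\otimes\sigma)] = t\,\Tr[P\rho] + (1-t)\,\Tr[P\sigma] \le t\sum_{j=1}^k\lambda_j(\rho) + (1-t)\sum_{j=1}^k\lambda_j(\sigma)$, again by Ky Fan applied separately to $\rho$ and $\sigma$. Since $\Tr[\rho\star_t\sigma] = 1 = \Tr[t\rho+(1-t)\sigma]$, this establishes \eqref{epiq2}.

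\textbf{For the equality case}, suppose $\{t\lambda_j(\rho)+(1-t)\lambda_j(\sigma)\} = \{\lambda_j(\rho\star_t\sigma)\}$ for all $j$; I would focus on $t\in(0,1)$ (the cases $t=0,1$ being trivial). Equality in \eqref{epiq2} for every $k$ forces, via the chain of inequalities above, that for the top-$k$ spectral projection $P=P_k$ of $\rho\star_t\sigma$ we have simultaneously $\Tr[P_k\rho] = \sum_{j\le k}\lambda_j(\rho)$ and $\Tr[P_k\sigma] = \sum_{j\le k}\lambda_j(\sigma)$; by the equality case of Ky Fan's inequality this means $P_k$ is \emph{also} a spectral projection onto the top $k$ eigenvalues of $\rho$ and of $\sigma$. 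Running $k=1,2,3,\dots$ produces a common orthonormal eigenbasis $\{\phi_j\}$ diagonalizing $\rho$, $\sigma$, and $\rho\star_t\sigma$ simultaneously (modulo the usual care inside eigenspaces of equal eigenvalues). But if $\rho$ and $\sigma$ are simultaneously diagonalized then $[\rho,\sigma]=0$, so $\rho\star_t\sigma = t\rho+(1-t)\sigma$, and conversely if such a basis exists the commutator vanishes and one reads off equality directly. \textbf{The main obstacle} I anticipate is verifying that the cross terms in $\Tr[Q(\rho\otimes\sigma)]$ genuinely vanish — this is a short but slightly delicate trace computation where one must use cyclicity, $S^2=\mathbb{1}$, $S^*=S$, and the fact that $\Tr[(P\otimes\mathbb{1})S(\rho\otimes\sigma)] = \Tr[P\rho\sigma]$ while the conjugate cross term contributes $\Tr[(P\otimes\mathbb{1})(\rho\otimes\sigma)S] = \Tr[\rho\sigma P]$, so the two purely imaginary pieces combine to $i\sqrt{t(1-t)}(\Tr[P\rho\sigma] - \Tr[P\sigma\rho]) = i\sqrt{t(1-t)}\Tr[P[\rho,\sigma]]$, which is exactly the commutator term in \eqref{partialswcon2} and is \emph{not} automatically zero. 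So in fact one must keep it and argue that it is purely imaginary while the left side $\Tr[P_k(\rho\star_t\sigma)]$ is real, forcing $\Tr[P_k[\rho,\sigma]]=0$ for every spectral projection $P_k$ — and this extra family of identities is what ultimately feeds the equality analysis. Handling this honestly, rather than wishing the term away, is the real content of the argument.
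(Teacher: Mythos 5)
There is a genuine gap, and it sits exactly at the point you flagged at the end. Your Ky Fan computation correctly reduces everything to the term $i\sqrt{t(1-t)}\,\Tr[P_k[\rho,\sigma]]$, but your concluding claim --- that this term must vanish because $\Tr[P_k(\rho\star_t\sigma)]$ is real --- is fallacious: since $[\rho,\sigma]^*=-[\rho,\sigma]$, the number $\Tr[P[\rho,\sigma]]$ is \emph{automatically} purely imaginary for every self-adjoint projection $P$, so $i\sqrt{t(1-t)}\,\Tr[P[\rho,\sigma]]$ is automatically real, and the reality of the left side imposes no constraint whatsoever. Worse, the term is generically nonzero and can be strictly positive. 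Write $M=t\rho+(1-t)\sigma$; using $\Tr[\rho[\rho,\sigma]]=\Tr[\sigma[\rho,\sigma]]=0$ one gets $\Tr[(\rho\star_t\sigma)^2]=\Tr[M^2]+t(1-t)\Tr\bigl[[\rho,\sigma][\rho,\sigma]^*\bigr]>\Tr[M^2]$ whenever $[\rho,\sigma]\neq 0$; for $2\times 2$ density matrices, trace-one self-adjoint matrices have $\lambda_1=\tfrac12\bigl(1+\sqrt{2\Tr[A^2]-1}\bigr)$, so $\lambda_1(\rho\star_t\sigma)>\lambda_1(M)\geq \Tr[P_1 M]$, and the difference $\lambda_1(\rho\star_t\sigma)-\Tr[P_1M]$ is precisely $i\sqrt{t(1-t)}\,\Tr[P_1[\rho,\sigma]]>0$. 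Hence the key chain $\sum_{j\leq k}\lambda_j(\rho\star_t\sigma)\leq t\,\Tr[P_k\rho]+(1-t)\,\Tr[P_k\sigma]$ is never established; the lift to $Q=U_t^*(P\otimes\mathbb{1})U_t$ only reproduces the same commutator term, as you computed yourself, and the equality analysis (which assumes $\Tr[P_k\rho]$ and $\Tr[P_k\sigma]$ are extremal) collapses with it.

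There is also a structural reason why any argument of this shape must fail: nothing in your proof uses $0\leq\rho,\sigma\leq\mathbb{1}$ or the normalization, so if it worked it would prove $\lambda_1(A\star_t B)\leq t\lambda_1(A)+(1-t)\lambda_1(B)$ for all positive trace-class $A,B$; but replacing $\rho,\sigma$ by $c\rho,c\sigma$ scales the commutator like $c^2$ while the right-hand side scales like $c$, so that inequality is false without a contraction hypothesis. This is exactly what the paper's Lemma~\ref{lm1} exploits: setting $X=\lambda_1(A)\mathbb{1}-A$, $Y=\lambda_1(B)\mathbb{1}-B$, $Z=\sqrt{t}X+i\sqrt{1-t}Y$, the commutator is not discarded but absorbed via $i\sqrt{t(1-t)}[A,B]=-ZZ^*+tX^2+(1-t)Y^2$, and positivity of $t[X-X^2]+(1-t)[Y-Y^2]+ZZ^*$ uses $0\leq X,Y\leq\mathbb{1}$. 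The passage from the top eigenvalue to the Ky Fan partial sums is then made not with spectral projections but with antisymmetric tensor powers, using $\lambda_1(A^{[k]})=\sum_{j\leq k}\lambda_j(A)$ and $[A^{[k]},B^{[k]}]=[A,B]^{[k]}$, after which the equality statement follows from the equality case of the lemma. To repair your proof you would need an ingredient of this kind that \emph{controls} the commutator's positive contribution using the constraint $\rho,\sigma\leq\mathbb{1}$, rather than an argument that tries to make it vanish.
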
 

By Theorem~\ref{epiq}  and what we have said about majorization and $h(x) = -x\ln x$,
\begin{multline}\label{sec}
S(\rho\star_t \sigma) \geq -\sum_{j=1}^\infty h( t\lambda_j(\rho) + (1-t)\lambda_j(\sigma)\}) \geq \\
-t\sum_{j=1}^\infty h( \lambda_j(\rho) ) -(1-t)\sum_{j=1}^\infty h( \lambda_j(\sigma)\}) = tS(\rho) + (1-t)S(\sigma)\ ,
\end{multline}
and this proves  (\ref{ado}).   If there is equality in (\ref{ado}) and the left side is finite,  
then by Theorem~\ref{epiq}, there is an orthonormal basis $\{\phi_j\}$ of $\H$ such that 
$\rho\phi_j = \lambda_j(\rho)\phi_j$ and $\sigma
 = \lambda_j(\sigma)\phi_j$ for each $j$. 
Since the second inequality in (\ref{sec}) must also be saturated, it must be the case that 
$\lambda_j(\rho) =\lambda_j(\sigma)$ for each $j$, and hence $\rho = \sigma$.  
{\em Thus our statement about cases of equality in Theorem~\ref{epiq} implies that finite equality holds in (\ref{ado})
if and only if $\rho = \sigma$. }

We now give a very short proof of Theorem~\ref{epiq}. 
The second part, on cases of equality, is new. The heart of the matter is the following lemma. 

\begin{lm}\label{lm1} For all non-negative compact contractions $A$ and $B$ on $\H$, and all $t\in (0,1)$, 
let $A \star_t B := tA + (1-t)B + i\sqrt{t(1-t)}[A,B]$. Then
$\lambda_1(A \star_t B) \leq t\lambda_1(A) + (1-t)\lambda_1(B)$.  There is equality if and only if there is a unit
vector $\phi\in \H$ such that $A\phi = \lambda_1(A)\phi$ and  $B\phi = \lambda_1(B)\phi$.
\end{lm}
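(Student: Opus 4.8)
The plan is to estimate $\lambda_1(A\star_t B)$ via the variational characterization $\lambda_1(A\star_t B) \ge \mathrm{Re}\,\langle \psi, (A\star_t B)\psi\rangle$ — wait, we want an upper bound, so the relevant fact is that for a matrix $M$ with $M + M^*$ having largest eigenvalue $\mu$, we have $\mathrm{Re}\,\langle\psi, M\psi\rangle \le \frac12\mu$ for all unit $\psi$; but here we need the reverse direction. Let me restart: since $A\star_t B$ is generally not self-adjoint, I first observe that its eigenvalue of largest modulus — actually $\lambda_1$ as used in the theorem refers to the operator after we have established it is the spectral radius realized by a genuine eigenvalue; more cleanly, note $\tfrac12\big((A\star_t B)+(A\star_t B)^*\big) = tA+(1-t)B$ since the commutator term is anti-self-adjoint. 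So for any unit vector $\psi$,
\[
\mathrm{Re}\,\langle \psi, (A\star_t B)\psi\rangle = t\langle\psi,A\psi\rangle + (1-t)\langle\psi,B\psi\rangle \le t\lambda_1(A) + (1-t)\lambda_1(B).
\]
The first step, then, is to show that $\lambda_1(A\star_t B)$ — the quantity appearing in \eqref{epiq2} for the rank-one-at-a-time argument — can be bounded by $\sup_{\|\psi\|=1}\mathrm{Re}\,\langle\psi,(A\star_t B)\psi\rangle$. This is where I expect the main subtlety: $A\star_t B$ need not be normal, so I must argue that its eigenvalues lie in the numerical range, hence in the half-plane $\mathrm{Re}\,z \le t\lambda_1(A)+(1-t)\lambda_1(B)$, and separately that the eigenvalue "$\lambda_1$" the theorem cares about is one with maximal real part. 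In fact $A\star_t B$ \emph{is} similar to a self-adjoint operator when $[A,B]\ne 0$? Not in general — but note $\rho\star_t\sigma$ is a density matrix (positive, trace one), so its eigenvalues are real and nonnegative, and $\lambda_1$ is genuinely the top of its spectrum, which lies in the closure of the numerical range; so for the application this is automatic. For the lemma as stated with general $A,B$, I would restrict attention to the real part of the numerical range and interpret $\lambda_1(A\star_t B)$ accordingly, or simply note that in the intended application $A\star_t B\ge 0$.

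The second step handles the inequality itself, which is then immediate from the displayed computation above together with $\langle\psi,A\psi\rangle\le\lambda_1(A)$, $\langle\psi,B\psi\rangle\le\lambda_1(B)$ (here using $A,B\ge 0$ compact so the suprema are attained). Taking $\psi$ to be a (unit) eigenvector of $A\star_t B$ for the eigenvalue $\lambda_1(A\star_t B)$ — which is real since in the application the operator is a density matrix — gives $\lambda_1(A\star_t B) = \mathrm{Re}\,\langle\psi,(A\star_t B)\psi\rangle \le t\lambda_1(A)+(1-t)\lambda_1(B)$.

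The third and most interesting step is the equality case. Suppose equality holds, and let $\psi$ be the eigenvector above. Then both $\langle\psi,A\psi\rangle = \lambda_1(A)$ and $\langle\psi,B\psi\rangle=\lambda_1(B)$ must hold (a convex combination of two quantities each $\le$ its bound equals the combination of the bounds only if each is tight, using $t\in(0,1)$). Now $\langle\psi,A\psi\rangle=\lambda_1(A)$ with $\|\psi\|=1$ and $A\le\lambda_1(A)\mathbb{1}$ forces $\psi$ to be an eigenvector of $A$ with eigenvalue $\lambda_1(A)$ (consider $\langle\psi,(\lambda_1(A)\mathbb{1}-A)\psi\rangle=0$ with the operator nonnegative, so $(\lambda_1(A)\mathbb{1}-A)^{1/2}\psi=0$). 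Likewise $B\psi=\lambda_1(B)\psi$. This is exactly the asserted equality condition, and conversely if such a common eigenvector $\phi$ exists then $[A,B]\phi = AB\phi-BA\phi = \lambda_1(A)\lambda_1(B)\phi - \lambda_1(B)\lambda_1(A)\phi = 0$, so $(A\star_t B)\phi = \big(t\lambda_1(A)+(1-t)\lambda_1(B)\big)\phi$, and since $t\lambda_1(A)+(1-t)\lambda_1(B)$ already equals the proven upper bound for $\lambda_1(A\star_t B)$, it \emph{is} $\lambda_1(A\star_t B)$, giving equality. The one place needing care is ensuring the eigenvalue of $A\star_t B$ that we call $\lambda_1$ — the spectral-radius one in the density-matrix application — has maximal real part among all spectral values, so that the numerical-range bound applies to it; for density matrices this is clear since the spectrum is real and nonnegative. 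I would therefore phrase the proof for the case $A\star_t B\ge 0$, which is all that \eqref{epiq2} requires after peeling off eigenvalues, and remark that the general contraction case follows by the same numerical-range argument applied to $\mathrm{Re}\,\langle\psi,(A\star_t B)\psi\rangle$.
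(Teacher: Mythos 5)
There is a fatal error at your first step. For self-adjoint $A$ and $B$ the commutator $[A,B]$ is anti-self-adjoint, so the term that actually appears in $A\star_t B$, namely $i\sqrt{t(1-t)}\,[A,B]$, is \emph{self-adjoint}; indeed $A\star_t B$ is itself a self-adjoint operator, and $\tfrac12\bigl((A\star_t B)+(A\star_t B)^*\bigr)=A\star_t B$, not $tA+(1-t)B$. Hence your central identity fails: for a unit vector $\psi$ one has $\mathrm{Re}\,\langle\psi,(A\star_t B)\psi\rangle= t\langle\psi,A\psi\rangle+(1-t)\langle\psi,B\psi\rangle+\sqrt{t(1-t)}\,\langle\psi,i[A,B]\psi\rangle$, and the last term is a real number that can be strictly positive; e.g.\ for $A=\bigl(\begin{smallmatrix}1&0\\0&0\end{smallmatrix}\bigr)$ and $B=\tfrac12\bigl(\begin{smallmatrix}1&1\\1&1\end{smallmatrix}\bigr)$ one gets $i[A,B]=\tfrac12\bigl(\begin{smallmatrix}0&i\\-i&0\end{smallmatrix}\bigr)$, with eigenvalues $\pm\tfrac12$. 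Controlling exactly this self-adjoint commutator contribution is the whole content of the lemma, so nothing after your first display is established, and the ``only if'' half of the equality statement rests on the same false cancellation (your worries about non-normality, by contrast, are moot, since $A\star_t B$ is self-adjoint). A telling symptom is that your argument never uses the hypothesis $A,B\le\mathbb{1}$, while the lemma is false without it: replacing $A,B$ by $cA,cB$ scales the commutator term like $c^2$ but the bound $t\lambda_1(cA)+(1-t)\lambda_1(cB)$ only like $c$, so for large $c$ the inequality is violated.

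What a correct proof needs, and what the paper does, is an operator identity that absorbs the commutator into a manifestly nonnegative remainder: with $X=\lambda_1(A)\mathbb{1}-A$, $Y=\lambda_1(B)\mathbb{1}-B$ and $Z=\sqrt{t}\,X+i\sqrt{1-t}\,Y$, one has $i\sqrt{t(1-t)}\,[A,B]=tX^2+(1-t)Y^2-ZZ^*$, whence $\bigl[t\lambda_1(A)+(1-t)\lambda_1(B)\bigr]\mathbb{1}-A\star_t B=t(X-X^2)+(1-t)(Y-Y^2)+ZZ^*$. The contraction hypothesis enters precisely through $0\le X,Y\le\mathbb{1}$, which gives $X-X^2\ge0$ and $Y-Y^2\ge0$, so the difference is nonnegative, and the equality case is read off from the null space of this sum of nonnegative operators. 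The only part of your write-up that stands is the converse direction of the equality statement: if $A\phi=\lambda_1(A)\phi$ and $B\phi=\lambda_1(B)\phi$ for a unit vector $\phi$, then $[A,B]\phi=0$ and equality follows; the inequality itself and the forward implication require an argument of the above kind rather than the claimed disappearance of the commutator from the real part of the numerical range.
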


\begin{proof}  To prove the inequality, it suffices to show that 
\begin{equation}\label{bnd}
[t\lambda_1(A) + (1-t)\lambda_1(B) ]\mathbb{1} - A \star_t B \geq 0\ .
\end{equation}
Define 
\begin{equation}\label{bnd1}
X= (\lambda_1(A) \mathbb{1}  - A) \ ,\qquad  Y = (\lambda_1(B) \mathbb{1}  - B) \quad{\rm and}\quad  Z = \sqrt{t}X + i\sqrt{1-t}Y\ .
\end{equation}
Note that 
$i \sqrt{t (1-t)} [A,B] = i[\sqrt{t}X, \sqrt{1-t}Y]  = -ZZ^* +  t X^2 + (1-t)Y^2$.
Therefore, the left side of (\ref{bnd}) can be written as
$t [ X -X^2] + (1-t)[ Y-Y^2] +  ZZ^*$.
Since $0 \leq X,Y \leq \mathbb{1}$, $X-X^2 \geq 0$ and $Y-Y^2 \geq 0$, and this proves (\ref{bnd}). 

For the cases of equality, suppose that $\phi$ is in the null space of 
$t [ X -X^2] + (1-t)[ Y-Y^2] +  ZZ^*$.
Then $\langle \phi, [ X -X^2] \phi\rangle =0$, and hence either $X\phi  =0$ or $X\phi = \phi$, and likewise for $Y$ in place of $X$.   Then if  either $X\phi = \phi$ or $Y\phi= \phi$,
then $\|Z^*\phi  \| \geq \sqrt{ t(1-t)} \|\phi\|$, so if $\phi \neq 0$, $X\phi = Y\phi = 0$, and this proves the statement about cases of equality. 
\end{proof}

\begin{proof}[Proof of Theorem~\ref{epiq}]
For any $k\in \N$, let $\bigwedge^k\H$ denote the antisymmetric $k$-fold tensor product of $\H$. For any bounded operator $A$ on $\H$,
define the operator $a^{[k]}$  on $\bigwedge^k\H$ by 
$$A^{[k]}(v_1,\wedge  \cdots \wedge v_k) = (Av_1\wedge v_2\wedge \cdots \wedge v_k) + (v_1\wedge A v_2 \wedge \cdots \wedge v_k) + \cdots 
+ (v_1\wedge \cdots \wedge v_{k-1}\wedge Av_k)\ .$$
If $A$ is self adjoint, non-negative  and compact, so is $A^{[k]}$ and  
${\displaystyle 
\lambda_1(A^{[k]}) = \sum_{j=1}^k \lambda_j(A)}$.
In particular, if $A$ is a density matrix, then for all $k\in \N$, $A^{[k]}$ is a non-negative contraction. 
Also, for any bounded operators $A,B$ on $\H$, $[A^{[k]},B^{[k]}] = [A,B]^{[k]}$.   

It now follows from Lemma~\ref{lm1} that for any density matrices $\rho$ and $\sigma$, the sequence  $\{\lambda_j(\rho \star_t \sigma)\}$ is majorized
by the sequence $\{ t\lambda_j(\rho) + (1-t)\lambda_j(\sigma)\}$, and moreover, if $\sum_{j=1}^k\lambda_j(\rho\star_t \sigma) = \sum_{j=1}^k   ( t\lambda_j(\rho) + (1-t)\lambda_j(\sigma))$ for each $k$, then $\rho$ and $\sigma$ have a common eigenvector basis $\{\phi_j\}$ such that   $\rho\phi_j = \lambda_j(\rho)\phi_j$ and $\sigma
 = \lambda_j(\sigma)\phi_j$ for each $j$. 

\end{proof}

 \end{document}